\documentclass[conf,twocolumn]{IEEEtran}
\usepackage{graphicx,cite,calc,dsfont}
\usepackage[defs]{ams}
\usepackage{eqnarray,subeqn,xcolor}
\usepackage{hyperref}
\usepackage{amsthm,amssymb}
\newtheorem{thm}{Theorem}
\newtheorem{lem}{Lemma}
\newtheorem{cor}{Corollary}
\theoremstyle{definition}
\newtheorem{examp}{Example}

\begin{document}

\title{Throughput of One-Hop Wireless Networks with Noisy Feedback Channel}

\author{ Seyed Pooya Shariatpanahi$^1$ , Hamed Shah-Mansouri, Babak Hossein Khalaj$^2$ \\
1: School of Computer Science, Institute for Research in Fundamental Sciences (IPM), Niavaran Sq., Tehran, Iran. \\
2: Department of Electrical Engineering and Advanced Communication Research Institute (ACRI)\\
Sharif University of Technology,
Azadi Ave., Tehran, Iran.\\
(emails: pooya@ipm.ir, hshahmansour@alum.sharif.edu, khalaj@sharif.edu)\\ }

\maketitle
\thispagestyle{empty}
\pagestyle{empty}

\section*{Abstract}
\emph{\let\thefootnote\relax\footnotetext{This research was in part supported by a grant from IPM. \\ Notation: we say $f(n)=o(g(n))$ if we have $|f(n)|<c|g(n)|$ for every positive constant $c$ and large-enough $n$, and $f(n)=\Omega(g(n))$ if $|f(n)|>k|g(n)|$, for some positive constant $k$ and large-enough $n$.  }In this paper, we consider the effect of feedback channel error on the throughput of
one-hop wireless networks under the random connection model. The transmission strategy is based on activating source-destination pairs with strongest direct links. While these activated pairs are identified based on Channel State Information (CSI) at the receive side, the transmit side will be provided with a noisy version of this information via the feedback channel. Such error will degrade network throughput, as we investigate in this paper. Our results show that if the feedback error probability is below a given threshold,
 network can tolerate such error without any significant throughput loss. The threshold value depends on the number of nodes in the network and the channel fading distribution. Such analysis is crucial in design of error correction codes for feedback channel in such networks.
}\\ \\
\emph{\textbf{Index Terms}}---Feedback error, random connection model, throughput, wireless networks.
\section{Introduction}\label{Sec_Introduction}
In order to achieve high throughput in wireless networks diverse techniques have been proposed. For example, \cite{Gupta} and \cite{Fran} consider
the technique of multi-hop routing, equipped with the spatial reuse idea, to arrive at high throughput in large wireless networks. Also, the work \cite{Ayfer}, employs a hierarchical MIMO
scheme to improve network throughput. These works employ the concept of spatial reuse to set up a large number of concurrent transmissions
in the network which heavily relies on the path loss phenomenon in wireless channels.

However, many wireless network scenarios face channel conditions where the dominant effect is the random fading -- and not the large scale path loss -- of the wireless channel (see e.g., \cite{Gowikar} and \cite{Ebrahimi}). The wireless links between nodes in such networks can be modeled by the so called \emph{random connections model}. Since in such networks the spatial reuse idea becomes useless, another approach -- based on the concept of multiuser diversity -- is followed to maximize the throughput. Simply put, in the scheme based on multiuser diversity, the channels with best conditions are activated in order to mitigate the interference effect (\cite{Gowikar}, \cite{Cui}, \cite{Ebrahimi},  \cite{Ebrahimi_2011}, \cite{Pooya_WCL}, and \cite{Pooya_2013}).
In \cite{Gowikar}, a multi-hop scheme is proposed for such networks. In \cite{Cui}, it is discussed that in the absence of spatial reuse opportunity,
using multiple hops does not improve the throughput, and thus, the main research efforts should be focused on single and dual hop strategies. The papers \cite{Ebrahimi}, \cite{Ebrahimi_2011}, \cite{Pooya_WCL}, \cite{Pooya_2013} consider one-hop strategies in such networks.

All the above works, under the random connection model, rely on strategies based on the  multiuser diversity concept, which is itself based on CSI. Since CSI is
available only at the receiver side, there should be a feedback channel through which the transmitters become informed of the transmission strategy. All the above papers
consider that this feedback channel is noiseless, and once the optimum transmission strategy is clear at the receive side (based on the available CSI), the transmit side will also get such information without error. However, such feedback channel is not perfect in practice, leading to noisy feedback received at each transmitter. Analyzing the effect of such imperfection
in the feedback channel on network throughput, is the main issue addressed in this paper.

In this paper, we assume a one-hop wireless network under the random connection model, where channel gains are assumed to be independent and identically distributed (i.i.d.) random variables. There exist $n$ source-destination pairs in our network which communicate in a shared wireless medium. Thus, communication from each source towards
its corresponding destination introduces interference to other pairs. The strategy considered here is to activate the pairs with best direct links, in order to alleviate interference. We investigate
the effect of feedback error on this scheme and analyze the throughput scaling.

It should be noted that considering the effect of noisy feedback channel on operation of wireless systems is a broadly investigated field \cite{Love}.
As the most relevant papers to ours, we can mention \cite{MIMO_1}, \cite{MUD_1}, \cite{MUD_2} and \cite{MUD_3}, which investigate the effect
of feedback error on the schemes based on multiuser diversity in the downlink of cellular networks. It should be noted that -- in contrast to their downlink traffic scenario -- we assume a peer to peer (i.e. unicast) traffic scenario in the network which makes our results totally different from theirs.
In summary, our paper is the first work considering the effect of feedback error on performance of one-hop schemes in networks under random connection model.

The paper is structured as follows. In Section \ref{Sec_Model}, we describe the network model. In Section \ref{Sec_Main}, the main result of
our paper is presented in the form of a theorem. In Section \ref{Sec_Tolerance}, we use the main result to analyze the tolerance of network throughput against feedback error.
Finally, Section \ref{Sec_Conclusion} concludes the paper.

\section{Network Model}\label{Sec_Model}
In this paper, we consider a wireless network consisting of $n$ source-destination pairs located in a shared wireless medium. The $i$th pair consists of the source node $S_i$ and the destination node $D_i$, where, $i=1,\dots,n$. Source $S_i$ aims to transmit its message to the destination $D_i$, and the destination $D_i$ is only interested in decoding the message sent by $S_i$.

\subsection{Wireless Channel Model}
 When the source node $S_i$ transmits its signal, in addition to its corresponding destination (i.e. $D_i$), other destinations (i.e. $D_j$, $j=1,\dots,n$, $j\neq i$) also hear the transmission inevitably through the so-called \emph{broadcast} phenomenon. The signal received by each destination is the \emph{superposition} of the signals transmitted by all sources. Accordingly, we have two kinds of wireless links in the network, namely \emph{direct} and \emph{cross} links. Direct links are those from each source to its corresponding destination, while cross links are from each source to all the $n-1$ non-desired destinations. We model channel power gain of the wireless link from $S_i$ to $D_j$ by the random variable $\gamma_{i,j}$. Based on such model, direct links constitute the set $\{\gamma_{i,i}, i=1,\dots,n\}$, while the cross links constitute the set $\{\gamma_{i,j}, i,j=1,\dots,n, i\neq j\}$. Under the so called \emph{random connection model}, the random variables $\{\gamma_{i,j}, i,j=1,\dots,n\}$ are assumed to be i.i.d. with the common probability distribution function (p.d.f.) $f(\gamma)$, and the corresponding cumulative distribution function (c.d.f.) $F(\gamma)$. We define $\mu=\mathbb{E}\{\gamma\}$, where $\mathbb{E}\{.\}$ stands for the expectation operator. The wireless channel dynamic is assumed to be quasi-static, where during each time slot, the channel power gains are assumed to remain fixed. However, channel gains are changed in consequent time slots, independent of other time slots.

\subsection{Network Operation}
Each source transmits its signal to the corresponding destination in a \emph{single hop} and under an \emph{on-off} transmission strategy. In other words, at each time slot, a subset of sources  -- i.e. $\mathbb{S} \subset \{S_1,...,S_n\}$ -- are set to be \emph{on} and transmit with unit power, and other sources are set to be \emph{off} and remain silent. When the transmission shots by active sources at each time slot is concluded, each destination tries to decode its desired data. We assume single-user decoding at all destination nodes, and accordingly, the successful reception condition at the receiver $i$ is translated to the following Signal to Interference and Noise Ratio ($SINR$) satisfaction constraint:
\begin{equation}\label{Eq_Model_SINR_Definition}
	SINR_{i}\triangleq \frac{\gamma_{i,i}}{N_0+\sum_{S_k \in \mathbb{S}, k\neq i}{\gamma_{k,i}}} \geq \beta,
\end{equation}
where $N_0$ is the power of Additive White Gaussian Noise (AWGN) at the receivers, $\beta$ is a constant threshold, and we have assumed unit transmission power for each source. So, each source-destination pair satisfying the $SINR$ constraint can establish a constant communication rate of $\log(1+\beta)$ at that time slot. In such a setting, we have implicitly assumed that each time slot is long enough for coding and decoding procedures needed to achieve arbitrarily small decoding error probability. Network \emph{throughput} (at that time slot) is defined as the number of those receivers who successfully decode their own message, i.e. those who satisfy the $SINR$ constraint. Such operation scenario is assumed to occur at all time slots.

\subsection{Channel State Information and Feedback Error}
Network throughput depends on the source activation strategy we design for choosing the set of active nodes (i.e. $\mathbb{S}$). This strategy relies on the information we have about the channel power gains (i.e. $\gamma_{i,j}, i,j=1,\dots,n$) in the form of CSI. In this paper, we consider an strategy that only relies on the information of direct links (i.e. $\gamma_{i,i}, i=1,\dots,n$), and ignores the information of cross links (i.e. $\gamma_{i,j}, i,j=1,\dots,n, i\neq j$). In order to further clarify our presentation, we assume that each destination has access to the information of all direct links\footnote{In fact, for the strategy used in this paper, it is enough for each destination to just know its direct link power with the corresponding source, which can be obtained via a training phase at the start of transmission at each time slot \cite{Pooya_2013}. It should be noted that the effect of such training overhead is beyond the scope of our paper.}.
\begin{figure}
\begin{center}
\includegraphics[width=0.51\textwidth]{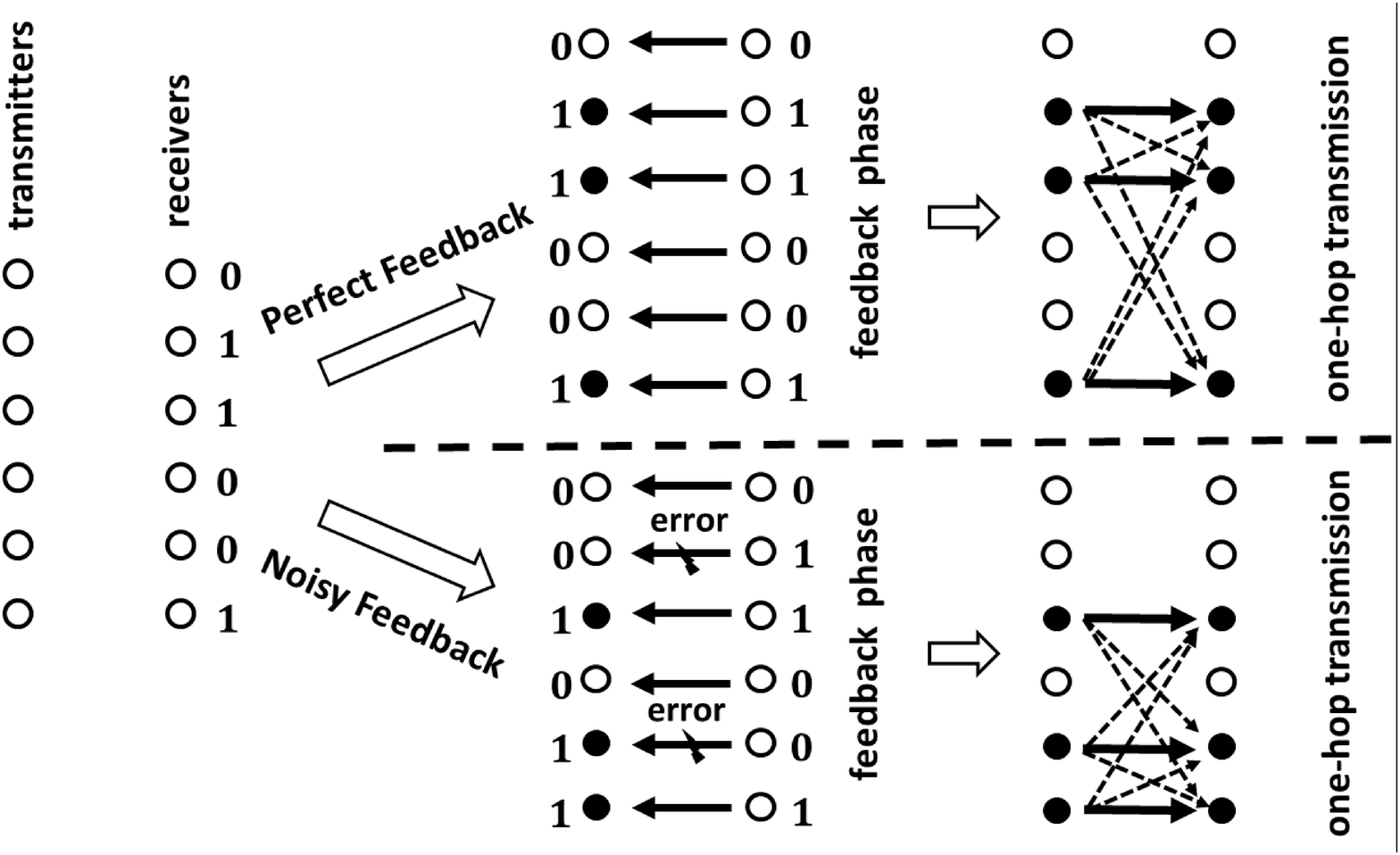}
\end{center}
\caption{Network operation with perfect and noisy feedback channel.\label{Fig1}}
\end{figure}

Based on this CSI, each destination determines if its corresponding source should be active at that time slot, or should remain silent. Then, this binary decision is sent back in form of a one-bit feedback to its corresponding source, via a feedback channel. In this paper, we assume that the feedback channel is noisy, and the one-bit feedback is received at the source with the probability of flip error\footnote{By flip error we mean that a \emph{being ``on''} decision will be understood mistakenly as a \emph{being ``off''} decision, and vice versa.} $\zeta$. The error probability for all feedback channels is the same and the error occurs independently on different channels. Also, to consider the most general setting, we assume that $\zeta$ depends on the number of nodes $n$ as denoted by $\zeta_n$.

Fig. \ref{Fig1} compares the network operation in the cases of perfect and noisy feedback channel. In the upper case in Fig. \ref{Fig1}, the feedback channel is perfect. Therefore, the transmission
strategy determined at the receivers (based on CSI), is sent back to the transmitters perfectly, and subsequently they become active correctly. In contrast, the lower case in Fig. \ref{Fig1} exhibits the noisy feedback scenario, where two feedback errors have occurred in this example.

\section{Achievable Throughput for Networks with Feedback Error}\label{Sec_Main}

We consider the following source activation strategy in this paper. First, we sort the channel power of direct links (i.e. $\gamma_{i,i}, i=1,\dots,n$) to determine their order statistics (i.e. $\gamma_{(i),(i)}, i=1,\dots,n$). Thus, we will have the following:
\begin{equation}\label{Eq_Main_Channel_Sorting}
	\gamma_{(1),(1)} \leq \gamma_{(2),(2)} \leq  \dots \leq \gamma_{(n),(n)}.
\end{equation}
Since each source-destination pair is in a one-to-one correspondence with each direct link, we can accordingly sort the source-destination pairs (i.e. $S_i-D_i, i=1,\dots,n$) into their sorted version (i.e. $S_{(i)}-D_{(i)}, i=1,\dots,n$). Now we have an ordering among the source-destination pairs based on the strength of direct links:
\begin{equation}\label{Eq_Main_Pair_Sorting}
	S_{(1)}-D_{(1)} \leq S_{(2)}-D_{(2)} \leq  \dots \leq S_{(n)}-D_{(n)}.
\end{equation}
The proposed activation strategy activates the maximum number of strongest source-destination pairs. In other words we set:
\begin{equation}\label{Eq_Main_Activation_Set}
	\mathbb{S}= \{ S_{(i)} |  i=n-m+1, \dots, n\},
\end{equation}
where $m$ is chosen as large as possible to maximize the throughput. The next theorem characterizes the achievable throughput of the network operated with this activation strategy, while the feedback channels suffer from imperfection.
\begin{thm}\label{Theorem_Main}
Assume a one-hop network with the following parameters:
\begin{itemize}
\item
$n$ : the number of source-destination pairs
\item
$\zeta_n$ : feedback channel error probability ($\zeta_n \stackrel{n \rightarrow \infty}\rightarrow 0$)
\item
$\beta$ : $SINR$ threshold for successful reception
\item
$\mu$ : channel power average
\item
$F(x)$ : c.d.f. of the channel power
\end{itemize}
Assume all $m>m_0$ satisfy the following inequality:
\begin{equation}\label{Eq_Main_Theorem_Required_Inequality}
(1-\delta_1)F^{-1}\left(1-\frac{m}{n}\right) > \beta K \mu \left(m+(1+\delta_2)\zeta_n(n-m)\right),
\end{equation}
where $m_0$ and $K>1$ are constants, and $\delta_1,\delta_2>0$ are arbitrarily small constants. Subsequently, the network throughput is of order $T=\Omega(m)$.
\end{thm}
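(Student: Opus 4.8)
The plan is to condition on the direct-link gains and the feedback bits, which jointly determine the actual active set $\mathbb{S}'$ --- namely the intended set $\mathbb{S}$ in which each ``on'' pair is independently turned ``off'' and each ``off'' pair independently turned ``on'', each with probability $\zeta_n$ --- and then to exploit that the cross-link gains $\{\gamma_{i,j}\}_{i\neq j}$ are independent of both of these: all gains are i.i.d., the direct-link set and the cross-link set are disjoint, and the activation ordering (\ref{Eq_Main_Pair_Sorting}) is a deterministic function of the direct links alone. First I would isolate a ``good event'' $\mathcal{E}$, measurable with respect to the direct links and feedback bits, on which (i) every activated pair has $\gamma_{(i),(i)}\ge(1-\delta_1)F^{-1}(1-m/n)$; (ii) $|\mathbb{S}'|\le M:=m+(1+\delta_2)\zeta_n(n-m)$; and (iii) at least $m/2$ of the $m$ activated pairs survive their feedback flip. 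Item (i) is an extreme-order-statistics estimate: the number of direct links exceeding $t:=(1-\delta_1)F^{-1}(1-m/n)$ is $\mathrm{Binomial}(n,1-F(t))$, and the slack $\delta_1$ places $t$ strictly below the $(1-m/n)$-quantile, so that this count has mean exceeding $m$ and a Chernoff bound makes it at least $m$ with high probability. Item (ii) is a Chernoff bound on the $\mathrm{Binomial}(n-m,\zeta_n)$ count of spurious activations, the slack $\delta_2$ supplying the room. Item (iii) holds since $\zeta_n\to 0$, so the $\mathrm{Binomial}(m,\zeta_n)$ number of flipped-off pairs is $o(m)$ with high probability. Hence $\Pr(\mathcal{E})\to 1$.

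Next, conditioning on any realization of the direct links and feedback bits that lies in $\mathcal{E}$ (all probabilities below being conditional on it), I would bound the per-pair failure probability. For a surviving activated pair $(i)$, the interference-plus-noise at $D_{(i)}$ is $N_0+Y_{(i)}$ with $Y_{(i)}=\sum_{k\in\mathbb{S}',\,k\neq(i)}\gamma_{k,(i)}$, which is now a sum of at most $M$ i.i.d.\ cross-link gains of mean $\mu$, so $\mathbb{E}[Y_{(i)}]\le\mu M$. Since $\gamma_{(i),(i)}\ge(1-\delta_1)F^{-1}(1-m/n)$ on $\mathcal{E}$, Markov's inequality gives
\[
\Pr\!\big(SINR_{(i)}<\beta\big)\;\le\;\frac{\beta\mu M}{(1-\delta_1)F^{-1}(1-m/n)-\beta N_0}\;<\;\frac{1}{K}\,(1+o(1)),
\]
the last inequality being exactly assumption (\ref{Eq_Main_Theorem_Required_Inequality}), with the $\beta N_0$ term absorbed into the $o(1)$ because (\ref{Eq_Main_Theorem_Required_Inequality}) forces $F^{-1}(1-m/n)$ to grow. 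As $K>1$, each surviving activated pair meets its $SINR$ constraint with probability at least $p:=1-1/K-o(1)>0$.

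Finally, still conditional on the direct links and feedback bits, the events $\{SINR_{(i)}\ge\beta\}$ over distinct surviving activated pairs $(i)$ depend on \emph{disjoint} subsets of the cross-link gains (the gain $\gamma_{a,b}$ enters only the interference of receiver $b$) and are therefore mutually independent. Hence, on $\mathcal{E}$, the throughput dominates a sum of at least $m/2$ independent $\mathrm{Bernoulli}(p)$ variables, giving $\mathbb{E}[T\mid\mathcal{E}]\ge pm/2$ and, by a Chernoff bound, $T\ge pm/4$ with conditional probability $1-e^{-\Omega(m)}$. Together with $\Pr(\mathcal{E})\to 1$ this yields $\mathbb{E}[T]=\Omega(m)$, and moreover $T=\Omega(m)$ with high probability whenever $m\to\infty$.

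I expect the main obstacle to be arranging the conditioning so that, relative to the single $\sigma$-algebra generated by the direct links and feedback bits, \emph{both} (a) each receiver's interference is a sum of i.i.d.\ cross-link gains whose number is uniformly bounded by $M$ (which makes the Markov step legitimate) and (b) the per-receiver success indicators become independent (which makes the Chernoff step legitimate); both rest on the activation ordering being a function of the direct links only, which are independent of the cross links. The remaining steps are routine concentration bookkeeping; the one genuinely delicate point is item (i) for an arbitrary c.d.f.\ $F$ --- this is exactly what the arbitrarily small $\delta_1$ is there to absorb, and it is immediate for the usual fading laws (for Rayleigh power, $F(x)=1-e^{-x/\mu}$, one gets $1-F(t)=(m/n)^{1-\delta_1}$, which exceeds $m/n$ by the factor $(n/m)^{\delta_1}\ge 1$).
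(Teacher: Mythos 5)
Your proposal is correct, but it reaches the conclusion by a genuinely different route than the paper on the key step of controlling interference. The paper proves that \emph{all} $m$ activated pairs succeed simultaneously with probability tending to one: it shows the weakest activated direct link concentrates at $F^{-1}(1-m/n)$ via Falk's asymptotic normality of intermediate order statistics (Lemmas \ref{Lemma_Falk} and \ref{Lemma_an_bn}), and then needs each receiver's interference to exceed $\phi=K\mu l$ with probability $o(1/m)$ so that a union bound over the $m$ receivers closes (Lemma \ref{Lemma_Interference}); that tail estimate is where the cited large-deviations theorems for super- and sub-exponential distributions enter, implicitly restricting the class of admissible $F$. You instead settle for a \emph{constant fraction} of successes: conditioning on the direct links and feedback bits, a single application of Markov's inequality bounds each receiver's failure probability by roughly $1/K<1$, and the observation that distinct receivers' interference sums involve disjoint cross-link gains makes the success indicators conditionally independent, so a Chernoff bound over receivers delivers $T\ge pm/4$ with $p=1-1/K-o(1)$. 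This buys you a more elementary argument that needs only a finite mean $\mu$ and no large-deviations machinery, at the cost of a smaller (but still $\Omega(m)$) constant; the conditional-independence observation, which the paper never exploits, is what lets Markov's inequality suffice where the paper needs $o(1/m)$ tails. Your treatment of the two feedback-error types matches the paper's Lemmas \ref{Lemma_Error_One} and \ref{Lemma_Error_Two} (Binomial/Chernoff counts). The one point to be careful about is your item (i): replacing Falk's result by a $\mathrm{Binomial}(n,1-F(t))$ count at $t=(1-\delta_1)F^{-1}(1-m/n)$ only guarantees mean $\ge m$, not mean $\ge(1+\epsilon)m$, for a completely arbitrary $F$ (a c.d.f. nearly flat just below the quantile defeats the Chernoff step); you flag this yourself, and the paper's own argument carries an analogous implicit regularity assumption through Lemma \ref{Lemma_an_bn}, so this is a shared caveat rather than a defect of your approach.
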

\begin{proof}
Let us first present the main ideas behind the theorem and then provide the rigorous proof. First, suppose that the feedback channel is perfect. The strategy for activating source nodes is based on activation of sources with strongest direct links. Suppose we activate $m$ of them (the set in (\ref{Eq_Main_Activation_Set})). Then, the power of direct links of activated source-destination pairs will be the set $\{\gamma_{(i),(i)}, i=n-m+1,\dots,n\}$. Define $r \triangleq n-m+1$. Then, the weakest direct link among these activated source-destination pairs is $\gamma_{(r),(r)}$. Therefore, if the pair $S_{(r)}-D_{(r)}$ is successful in communicating message, then all the other $m-1$ activated pairs will be successful too, and accordingly, we will have throughput of order $m$. Therefore, it is sufficient to focus our analysis on this weakest direct link.

We should select $m$ as large as possible to maximize the throughput. However, increasing $m$ has two coexisting effects. First, $\gamma_{(r),(r)}$ is a decreasing function of $m$, and thus enlarging $m$ will result in weakening $\gamma_{(r),(r)}$. Second, the interference is proportional to the number of activated nodes $m$, thus enlarging $m$ will increase interference level. Therefore, to ensure that the weakest pair satisfies the $SINR$ constraint, we should put a limit on the value of $m$, beyond which interference power will dominate the direct channel power. Such optimum value for $m$ characterizes the target throughput.

Now, suppose that we have a probability of feedback error equal to $\zeta_n$. Consequently, although the chosen activation vector includes pairs with strongest direct links (as in (\ref{Eq_Main_Activation_Set})), the elements of \emph{noisy} activation vector at the transmitter side are flipped with the probability $\zeta_n$. This imperfection introduces two kinds of errors. The first one includes those sources, which according to the original activation vector should have been active, but in the noisy activation vector have become \emph{mistakenly} silent (the second pair from the top in Fig \ref{Fig1}). We call this error, the error of first kind. The second error involves those sources which in the original activation vector were silent, and due to the feedback error have become \emph{mistakenly} active (the fifth pair from the top in Fig. \ref{Fig1}). We call this error, the error of second kind. According to Lemma \ref{Lemma_Error_One}, the error of first kind does not change the order of scaling of the throughput, and thus, can be ignored in the analysis.
\begin{lem}\label{Lemma_Error_One}
Define $E_1$ to be the number of first kind errors. Then, $E_1$ is so small that the presence of the error of first kind will not change the order of throughput, and can be ignored in the scaling analysis.
\end{lem}
Also, according to Lemma \ref{Lemma_Error_Two}, if the number of active sources in the original activation vector is $m$, then the second kind error will not occur more than $\zeta_n (n-m)$ times, with high probability.
\begin{lem}\label{Lemma_Error_Two}
Define $E_2$ to be the number of second kind errors. Then, we will have:
\begin{eqnarray}\label{Eq_Main_Lemma_Error_Two}
\lim_{n\rightarrow \infty} \Pr\{ E_2>(1+\delta_3)\zeta_n(n-m)\} =0 \\ \nonumber
\lim_{n\rightarrow \infty} \Pr\{ E_2<(1-\delta_3)\zeta_n(n-m)\} =0,
\end{eqnarray}
where $\delta_3>0$ is an arbitrarily small constant.
\end{lem}
By ignoring the first kind error (based on Lemma \ref{Lemma_Error_One}), the effect of feedback error can be translated into additional interference generated by the second kind error where this extra interference is proportional to the number of sources experiencing the second kind error, i.e. $\zeta_n (n-m)$.

On the other hand, from a result in the intermediate order statistics context (stated in Lemma \ref{Lemma_Falk} in appendix) it can be seen that $\gamma_{(r),(r)}$ is of order $F^{-1}\left(1-\frac{m}{n}\right)$. Now we can interpret the Left Hand Side (LHS) and Right Hand Side (RHS) of (\ref{Eq_Main_Theorem_Required_Inequality}). LHS(5) is, roughly speaking, the order of the weakest direct link power in the activated set (i.e. $\gamma_{(r),(r)}$). RHS(5) consist of two interference terms. The first one is the interference present even in the case of perfect feedback channel -- which is of order $m$. The second one is the interference imposed by the feedback error of the second kind -- which is of order $\zeta_n(n-m)$. Thus, (\ref{Eq_Main_Theorem_Required_Inequality}) ensures that $m$ grows slowly-enough (as a function of $n$) so that the SINR constraint at the weakest direct link is satisfied. Accordingly, all the $m$ source-destination pairs will be successful, and the throughput of order $m$ can be achieved.

After presenting the main ideas behind the theorem, we provide the rigorous proof. Suppose that the number of activated sources according to the original activation vector is $m$. Define:
\begin{eqnarray}\label{Eq_Main_Rigorous_Proof_Definitions}
l&\triangleq&m+(1+\delta_2)\zeta_n(n-m) \\ \nonumber
\phi&\triangleq&K\mu l,
\end{eqnarray}
where $K>1$. Then, we will have Lemma \ref{Lemma_Direct_Power} for the power of direct links of these $m$ strongest source-destination pairs:
\begin{lem}\label{Lemma_Direct_Power}
If $m$ satisfies inequality (\ref{Eq_Main_Theorem_Required_Inequality}), we will have
\begin{equation}\label{Eq_Main_Lemma_Direct_Power}
\lim_{n\rightarrow \infty}\Pr\{\gamma_{(r),(r)} >  \beta \phi \cap \dots \cap \gamma_{(n),(n)}>  \beta  \phi\} =1.
\end{equation}
\end{lem}
Also, if $I_i$ is defined to be the interference imposed to the pair $S_{(i)}-D_{(i)}$, from our above discussion we
know that it consists of two kinds of interference. When considered together, the total interference
can be considered as the sum of $l$ i.i.d. random variables with the c.d.f. $F(\gamma)$, where $l$ is defined in (\ref{Eq_Main_Rigorous_Proof_Definitions}). Then, we will have the following lemma:
\begin{lem}\label{Lemma_Interference}
\begin{equation}\label{Eq_Main_Lemma_Interference}
\lim_{n\rightarrow \infty}\Pr\{I_r \leq \phi \cap \dots \cap I_n \leq  \phi\} =1.
\end{equation}
\end{lem}
Lemmas \ref{Lemma_Direct_Power} and \ref{Lemma_Interference}, when considered together, ensure that $SINR$ at all the $m$ strongest source-destination pairs will be greater than $\beta$. It should be noted that, although some of these activated source-destination pairs are mistakenly inactivated due to error of first kind, the order of throughput loss by such fact is less than $m$ (a fact made precise in Lemma \ref{Lemma_Error_One}). Therefore, this effect does not change the number of successful receptions scaling. This concludes the proof (proofs of Lemmas \ref{Lemma_Error_One}, \ref{Lemma_Error_Two}, \ref{Lemma_Direct_Power}, and \ref{Lemma_Interference} are provided in the appendix.).
\end{proof}

\section{Network Feedback Error Tolerance}\label{Sec_Tolerance}
Next, we exploit Theorem \ref{Theorem_Main} to analyze sensitivity of network throughput to the feedback error. The main result of this section is presented in the following Corollary:
\begin{cor}\label{Corr_Tolerance}
Suppose for a network setting with perfect feedback channels we achieve the throughput of order $T(n)$. Then, for the same network setting with the feedback error probability $\zeta_n=o\left(\frac{T(n)}{n}\right)$, throughput of the same order $T(n)$ is achievable.
\end{cor}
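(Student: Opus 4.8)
The plan is to obtain the corollary as a direct consequence of Theorem \ref{Theorem_Main}, applied twice: once with $\zeta_n\equiv 0$, to read off what the hypothesis ``the perfect-feedback network achieves throughput of order $T(n)$'' actually provides, and once with the given $\zeta_n$. The key observation is that inequality (\ref{Eq_Main_Theorem_Required_Inequality}) in the noisy case differs from its $\zeta_n=0$ counterpart only through the single additive term $\beta K\mu(1+\delta_2)\zeta_n(n-m)$ on the right-hand side, and under the hypothesis $\zeta_n=o(T(n)/n)$ this term is $o(m)$ for the relevant value $m=\Theta(T(n))$. It can therefore be absorbed into the already-present $\beta K\mu m$ term at the cost of an arbitrarily small increase of the multiplicative constant, and Theorem \ref{Theorem_Main} leaves exactly that much freedom in the constant $K$.

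First I would record what perfect feedback gives us. By Theorem \ref{Theorem_Main} with $\zeta_n\equiv 0$, the assumption that the perfect-feedback network achieves throughput of order $T(n)$ corresponds to a sequence $m(n)>m_0$ with $m(n)=\Theta(T(n))$, and constants $\delta_1>0$, $K>1$, for which $(1-\delta_1)F^{-1}\left(1-m(n)/n\right)>\beta K\mu\, m(n)$ holds for all large $n$. (Since throughput never exceeds $n$, we have $T(n)/n\le 1$, so $\zeta_n=o(T(n)/n)$ forces $\zeta_n\to 0$; hence the standing hypothesis of Theorem \ref{Theorem_Main} is indeed met by the noisy network.) Next, from $\zeta_n=o(T(n)/n)$ and $m(n)=\Theta(T(n))$ I would deduce $n\zeta_n=o(m(n))$, so that, for any fixed $\delta_2>0$,
\[
\epsilon_n \;:=\; \frac{(1+\delta_2)\,\zeta_n\,(n-m(n))}{m(n)} \;\le\; (1+\delta_2)\,\frac{n\zeta_n}{m(n)} \;=\; o(1).
\]

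Finally I would choose $K'':=\frac{1+K}{2}$, so that $1<K''<K$ and hence $K/K''=1+\frac{K-1}{K+1}>1$. The right-hand side of (\ref{Eq_Main_Theorem_Required_Inequality}) for the noisy network, with $K$ replaced by $K''$, equals $\beta K''\mu\, m(n)\,(1+\epsilon_n)$; since $\epsilon_n=o(1)$ we have $1+\epsilon_n<K/K''$ for all large $n$, so this quantity falls below $\beta K\mu\, m(n)$, which by the perfect-feedback inequality above falls below $(1-\delta_1)F^{-1}\left(1-m(n)/n\right)$. Thus $m=m(n)$ satisfies (\ref{Eq_Main_Theorem_Required_Inequality}) for the noisy network with the constants $\delta_1,\delta_2,K''$, and Theorem \ref{Theorem_Main} then yields throughput $\Omega(m(n))=\Omega(T(n))$, which is the assertion of the corollary. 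The one delicate point---essentially the only place where care is needed---is that the perfect-feedback inequality may hold only with a margin that itself shrinks with $n$, so one cannot literally re-insert it after adding the extra interference term; exploiting the slack available in the constant $K$, as above (equivalently, replacing $m(n)$ by a fixed fraction of itself, which preserves the $\Theta(T(n))$ scaling), is what overcomes this.
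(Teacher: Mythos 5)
Your proposal is correct and follows essentially the same route as the paper: both reduce the corollary to re-verifying inequality (\ref{Eq_Main_Theorem_Required_Inequality}) for the noisy network, using $\zeta_n=o\left(T(n)/n\right)$ to show that the extra interference term $(1+\delta_2)\zeta_n(n-m)$ is $o(m)$ and absorbing it into available constant slack. The only difference is mechanical: you absorb it by shrinking the constant $K$ to $K''=(1+K)/2$ while keeping $m=\Theta(T(n))$ fixed, whereas the paper keeps $K$ and instead sets $m=c_1T(n)$ for a small constant $c_1$, using monotonicity of $F^{-1}$ to strengthen the left-hand side --- the two devices are interchangeable, as you note yourself.
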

\begin{proof}
The intuition behind the corollary is straightforward. If the feedback channels are perfect and the throughput of $T(n)$ is achievable, then
we have $T(n)$ active sources. Therefore, the interference will be of order $T(n)$ in the absence of feedback error. Feedback error introduces the extra interference generated by the second kind error. Roughly speaking, this interference is at most of order $\zeta_n n = o(T(n))$ which is asymptotically smaller than the original interference, and can be ignored compared to that.
Thus, introduction of feedback error will not change the order of total interference power, and consequently, the same number of pairs can be activated in this case.

The rigorous proof is as follows. We assume that in the noise-free case the throughput $T(n)$ is achievable, then we prove that the throughput $m=c_1T(n)$ is achievable in the noisy case, for small-enough constant $c_1$. We have:
\begin{eqnarray}\label{Eq_Tolerance_Corollary_Proof}
LHS(5)&=&(1-\delta_1)F^{-1}\left(1-\frac{m}{n}\right)  \\ \nonumber
&\stackrel{(a)}=&(1-\delta_1)F^{-1}\left(1-\frac{c_1 T(n)}{n}\right) \\ \nonumber
&\stackrel{(b)}>& (1-\delta_1)F^{-1}\left(1-\frac{T(n)}{n}\right) \\ \nonumber
&\stackrel{(c)}>&\beta K \mu T(n) \\ \nonumber
&\stackrel{(d)}>&\beta K \mu (c_1+c_2(1+\delta_2))T(n) \\ \nonumber
&=&\beta K \mu (c_1T(n)+c_2\frac{T(n)}{n}(1+\delta_2)n) \\ \nonumber
&>&\beta K \mu (c_1T(n)+c_2\frac{T(n)}{n}(1+\delta_2)(n-c_1T(n))) \\ \nonumber
&\stackrel{(e)}>&\beta K \mu (m+\zeta_n(1+\delta_2)(n-m)) \\ \nonumber
&=&RHS(5),
\end{eqnarray}
where (a) follows from the fact that we have set $m=c_1T(n)$, (b) follows from the facts that $F^{-1}(x)$ is a strictly increasing function and $c_1<1$, (c) follows from our assumption that the throughput $T(n)$ is achievable in the noise-free case (applying inequality (\ref{Eq_Main_Theorem_Required_Inequality}) with $\zeta_n=0$, and $m=T(n)$), (d) follows from the fact that $c_1,c_2>0$ are arbitrarily small constants, and (e) follows from our assumption that $\zeta_n=o\left(T(n)/n\right)$.

Thus, we have proved that if throughput $T(n)$ with $\zeta_n=0$ is achievable, then throughput of order $T(n)$ with $\zeta_n=o\left(T(n)/n\right)$ is also achievable.
\end{proof}
In summary, this corollary states that the feedback error of order $T(n)/n$ is tolerable without paying any price in terms of throughput scaling. Next examples are applications of this corollary to specific channel distributions.
\begin{examp}
Consider a network with Rayleigh fading wireless channels. Then, the c.d.f. of power of such channel is:
\begin{equation}\label{Eq_Tolerance_Example_Rayleigh_CDF}
F(x)=1-e^{-x/\mu},
\end{equation}
where $\mu$ is the average channel power. First, we show that the throughput of order $\log(n)$ in a network with noiseless feedback channel is achievable. In order to do that, by choosing $c<(1-\delta_1)/\beta K$, for large enough $n$ we will have:
\begin{eqnarray}\label{Eq_Tolerance_Example_Rayleigh_Without_Error_1}
c\frac{\log(n)}{n} &\leq& \frac{1}{n^{\beta K  c / (1-\delta_1)}} \\ \nonumber
&=& e ^ {-\beta K  c \log(n)/(1-\delta_1)},
\end{eqnarray}
which can be rewritten as
\begin{eqnarray}\label{Eq_Tolerance_Example_Rayleigh_Without_Error_2}
1-c\frac{\log(n)}{n} &>& 1 -  e ^ {-\beta K c \log(n)/(1-\delta_1)} \\ \nonumber
&=& F \left( \frac{\beta K \mu c \log(n)}{1-\delta_1} \right).
\end{eqnarray}
Since $F^{-1}(x)$ is a strictly increasing function, we can arrive at
\begin{eqnarray}\label{Eq_Tolerance_Example_Rayleigh_Without_Error_3}
F^{-1}\left(1-\frac{c\log(n)}{n}\right) > \frac{\beta K \mu c \log(n)}{1-\delta_1},
\end{eqnarray}
or equivalently
\begin{eqnarray}\label{Eq_Tolerance_Example_Rayleigh_Without_Error_4}
LHS(5)&=&(1-\delta_1)F^{-1}\left(1-\frac{m}{n}\right) \\ \nonumber
&\stackrel{(a)}=&(1-\delta_1)F^{-1}\left(1-\frac{c\log(n)}{n}\right) \\ \nonumber
&\stackrel{(b)}>& \beta K \mu c \log(n) \\ \nonumber
&=& \beta K \mu m \\ \nonumber
&\stackrel{(c)}=&RHS(5),
\end{eqnarray}
where (a) follows from the fact that we have set $m=c\log(n)$, (b) follows from (\ref{Eq_Tolerance_Example_Rayleigh_Without_Error_3}), and (c) is due to the fact that we have assumed $\zeta_n=0$. Accordingly, based on Theorem \ref{Theorem_Main}, we have proved that in a network with perfect feedback channel (i.e. $\zeta_n=0$), the throughput of order $\log(n)$ (i.e. $m=c\log(n)$, for constant $c$) is achievable.

Therefore, according to corollary \ref{Corr_Tolerance}, in this network with a feedback error probability of order $o\left(\log(n)/n\right)$, throughput of the same order $\log(n)$ is achievable. This fact has an important practical implication. It means, by designing a powerful-enough error correction coding scheme -- which suppresses feedback error probability to order $o\left(\log(n)/n\right)$ -- we can alleviate the harmful effect of feedback error. Moreover, designing more powerful error correction coding schemes will not be of further value.
\end{examp}

\begin{examp}
Consider a network in which the c.d.f. of the underlying channel power has the form
\begin{equation}\label{Eq_Tolerance_Example_Pareto_CDF}
F(x)=1-\frac{1}{(1+x)^\alpha},
\end{equation}
where $\alpha>2$ is the distribution parameter. First, we note that in the case of perfect feedback channel, the throughput of order $n^{1/(1+\alpha)}$ is achievable. In order to clarify this issue, we set $m=c n^{1/(1+\alpha)}$. Then, we will have
\begin{eqnarray}\label{Eq_Tolerance_Example_Pareto_Without_Error_1}
LHS(5)&=&(1-\delta_1)F^{-1}\left(1-\frac{m}{n}\right)\\ \nonumber
&\stackrel{(a)}=& (1-\delta_1)\left(\left(\frac{n}{m}\right)^{1/\alpha} -1 \right) \\ \nonumber
&\stackrel{(b)}=&(1-\delta_1)\left( \frac{1}{c^{1/\alpha}} n^{1/(\alpha+1)} -1 \right) \\ \nonumber
&\stackrel{(c)}>& \beta K \mu c n^{1/(\alpha+1)} \\ \nonumber
&=& \beta K \mu m \\ \nonumber
&\stackrel{(d)}=& RHS(5),
\end{eqnarray}
where (a) comes from (\ref{Eq_Tolerance_Example_Pareto_CDF}), (b) is due to the fact that we have put $m=c n^{1/(1+\alpha)}$, (c) is valid for small-enough constant $c$, and (d) is because we have assumed $\zeta_n=0$. Thus, we have shown that inequality (\ref{Eq_Main_Theorem_Required_Inequality}) is valid for $\zeta_n=0$ and $m=c n^{1/(1+\alpha)}$ (which indicates achievability of throughput of order $n^{1/(1+\alpha)}$ in the error-free network). Then, according to Corollary \ref{Corr_Tolerance}, we know that throughput of the same order $n^{1/(1+\alpha)}$ is achievable even if we have feedback error probability of order $o\left(T(n)/n\right)=o\left(n^{-\alpha/(1+\alpha)}\right)$. Here, we have the same practical guideline for error correction coding design, as in the previous example.
\end{examp}
\section{Conclusions}\label{Sec_Conclusion}
In this paper, we have analyzed the effect of feedback error on performance of one-hop communications in wireless networks.
The channel model assumed was the random connection model, and the transmission strategy was based on activating source-destination pairs with
strongest direct links. Although the feedback error degrades network throughput, we have proved that there is a threshold for
the feedback error probability, below which the network can tolerate the error, in the scaling sense. This threshold is of order $o(\log(n)/n)$ for the Rayleigh fading case,
and of order $o(n^{-\alpha/(\alpha+1)})$ for the Pareto power distribution. These results are of great importance in the design of error correction scheme for the feedback channels.

In practice, in addition to error, feedback information experiences delay in the feedback channel. Analyzing the effect of delay under our network model assumptions and for unicast traffic, is a promising direction for further research.

\section*{Appendix}\label{Sec_Appendix}
\begin{proof}[Proof of Lemma \ref{Lemma_Error_One}]
To prove this lemma we show that (assuming that in the original activation vector, $m$ nodes are active) the number of source-destination pairs encountering the error of first kind is less than $m$ in the scaling sense. Thus, these errors will not harm the scaling of the number of successful pairs.
Let us consider two cases:
\begin{itemize}
\item
$\lim_{n\rightarrow \infty}{\zeta_n m}= cte.\geq 0:$ \\
In this case, the number of pairs experiencing error of the first kind will be a Poisson random variable with
bounded average, and thus can be asymptotically ignored in comparison with $m$.
\item
$\lim_{n\rightarrow \infty}{\zeta_n m} = \infty:$ \\
Define the binary random variable $B_i$ to indicate the event that the $i$th source among the first $m$ strongest pairs experiences
the error of first kind. Then, using a well established probabilistic discussion, we will have:
\begin{eqnarray}\label{Eq_Appendix_Lemma_Error_One_1}
& &\Pr\{E_1>(1+\delta_3)\zeta_n m \} \\ \nonumber
 &=& \Pr\{\sum_{i=1}^{m}{B_i}>(1+\delta_3)\zeta_n m \} \\ \nonumber
&=&\Pr\{e^{s\sum_{i=1}^{m}{B_i}}>e^{s(1+\delta_3)\zeta_n m} \} \\ \nonumber
&\stackrel{(a)}\leq& \frac{\left(\mathbb{E}\{e^{sB_1}\}\right)^m}{e^{s(1+\delta_3)\zeta_n m}} \\ \nonumber
&=& \frac{\left(e^s \zeta_n+ (1-\zeta_n)\right)^m}{e^{s(1+\delta_3)\zeta_n m}} \\ \nonumber
&\stackrel{(b)}<& \frac{e^{\zeta_n m (e^s-1)}}{e^{s(1+\delta_3)\zeta_n m}} \\ \nonumber
&\stackrel{(c)}=&e^{-\zeta_n m \Lambda(\delta_3)} \\ \nonumber
&\rightarrow& 0,
\end{eqnarray}
where (a) is due to Markov's inequality and independence of feedback errors for different pairs, (b) is due to the identity $1+x<e^x$ for $x>0$, and (c) is by putting $s=\log(1+\delta_3)$ and defining $\Lambda(x) \triangleq (1+x)\log(1+x) - x > 0$.
Since $\zeta_n \rightarrow 0$, $E_1$ is asymptotically dominated by $m$.
\end{itemize}
\end{proof}
\begin{proof}[Proof of Lemma \ref{Lemma_Error_Two}]
The proof of this lemma is very similar to the proof of Lemma \ref{Lemma_Error_One}.
\end{proof}
\begin{proof}[Proof of Lemma \ref{Lemma_Direct_Power}]
Before proving this lemma, we need two other lemmas which we present first:
\begin{lem} [Falk, 1989] \label{Lemma_Falk}
Assume that $X_1,X_2,\dots,X_n$ are i.i.d. random variables with the c.d.f. $F(x)$. Define $X_{(1)},X_{(2)}, \dots, X_{(n)}$ to be the order statistics of $X_1,X_2,\dots,X_n$. If $i \rightarrow \infty$ and $i/n \rightarrow 0$ as $n \rightarrow \infty$, then there exist sequences $a_n$ and $b_n>0$ such that
\begin{equation}\label{Eq_Appendix_Lemma_Falk_1}
	\frac{X_{(n-i+1)}-a_n}{b_n} \Rightarrow N(0,1),
\end{equation}
where $\Rightarrow$ denotes convergence in distribution, and $N(0,1)$ is the Normal distribution with zero mean and unit variance. Furthermore, one choice for $a_n$ and $b_n$ is:
\begin{eqnarray}\label{Eq_Appendix_Lemma_Falk_2}
	a_n=F^{-1}\left(1-\frac{i}{n}\right), \;\;\;	b_n=\frac{\sqrt{i}}{nf(a_n)}.
\end{eqnarray}
\end{lem}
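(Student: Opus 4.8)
The plan is a three-step reduction. First I would pass from the general c.d.f.\ $F$ to the uniform distribution: letting $U_1,\dots,U_n$ be i.i.d.\ Uniform$(0,1)$ with order statistics $U_{(1)}\le\dots\le U_{(n)}$, the probability integral transform gives $X_{(n-i+1)}\stackrel{d}{=}F^{-1}\!\left(U_{(n-i+1)}\right)$, so it is enough to understand the uniform intermediate order statistic and then apply $F^{-1}$. For that I would use the R\'enyi representation $U_{(n-i+1)}\stackrel{d}{=}1-S_i/T_{n+1}$, where $S_i=E_1+\dots+E_i$, $T_{n+1}=E_1+\dots+E_{n+1}$, and $E_1,\dots,E_{n+1}$ are i.i.d.\ standard exponential.

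Next I would establish the Gaussian limit on the uniform scale. Since $i\to\infty$, the classical central limit theorem gives $(S_i-i)/\sqrt{i}\Rightarrow N(0,1)$; since $i/n\to0$, the law of large numbers gives $T_{n+1}/n\to1$ in probability. Expanding $S_i/T_{n+1}=(i/n)\bigl(1+(S_i-i)/i\bigr)\bigl(1+(T_{n+1}-n)/n\bigr)^{-1}$ and retaining leading-order terms, one finds $1-U_{(n-i+1)}=i/n+(\sqrt{i}/n)Z_n+o_p(\sqrt{i}/n)$, where the fluctuation of the denominator is absorbed into the error term because $\sqrt{i/n}\to0$ relative to the main fluctuation scale $\sqrt{i}/n$. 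Hence, with $c_n:=1-i/n$ and $d_n:=\sqrt{i}/n$, we obtain $(U_{(n-i+1)}-c_n)/d_n\Rightarrow N(0,1)$.

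Finally I would transfer the limit back to the $X$-scale. Writing $U_{(n-i+1)}=c_n+d_nW_n$ with $W_n\Rightarrow N(0,1)$, and noting $a_n=F^{-1}(c_n)$ and $b_n=d_n/f(a_n)$, a mean-value expansion of $F^{-1}$ yields
\begin{equation*}
\frac{X_{(n-i+1)}-a_n}{b_n}\;=\;W_n\,\frac{f(a_n)}{f\!\left(F^{-1}(\xi_n)\right)}
\end{equation*}
for a random $\xi_n$ lying between $c_n$ and $c_n+d_nW_n$. Since $d_n/(1-c_n)=1/\sqrt{i}\to0$, the point $\xi_n$ occupies a vanishing relative fraction of the gap between $c_n$ and the right endpoint, so under the regularity assumption on $F$ (a positive density near the endpoint obeying a von Mises--type slow-variation condition, which is exactly what makes the explicit choice of $a_n$ and $b_n$ admissible) the ratio $f(a_n)/f(F^{-1}(\xi_n))\to1$ in probability; Slutsky's theorem then delivers $(X_{(n-i+1)}-a_n)/b_n\Rightarrow N(0,1)$, as claimed. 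The hard part is precisely this last ratio: controlling the oscillation of the density over the shrinking neighbourhood of the intermediate quantile is where the smoothness hypotheses on $F$ do the real work, while the remainder of the argument is routine bookkeeping with the CLT, the LLN, and Slutsky's theorem.
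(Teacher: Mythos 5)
Your sketch is sound and follows the standard route, but note that the paper does not actually prove this lemma at all: its ``proof'' is a one-line citation to Falk's 1989 article, so there is nothing in the paper to compare against step by step. What you have reconstructed --- probability integral transform, R\'enyi representation $1-U_{(n-i+1)}\stackrel{d}{=}S_i/T_{n+1}$, CLT on $S_i$ since $i\to\infty$, LLN on $T_{n+1}$, the scale comparison showing the denominator fluctuation is $o_p(\sqrt{i}/n)$ because $\sqrt{i/n}\to 0$, and finally the mean-value/delta-method transfer through $F^{-1}$ --- is essentially the classical proof of asymptotic normality of intermediate order statistics, and each of those steps checks out. The one point worth making explicit is that you have correctly located the real content: the lemma as transcribed in the paper is not true for arbitrary $F$ with the stated $a_n,b_n$, because the final ratio $f(a_n)/f(F^{-1}(\xi_n))\to 1$ requires a von Mises--type regularity condition on the density near the upper endpoint (this hypothesis is present in Falk's theorem but silently dropped in the paper's statement). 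Your honest flagging of that gap is a genuine improvement over the paper, which hides the issue inside the citation; if you wanted to close the sketch fully you would state the regularity condition precisely (e.g., that $s\mapsto f(F^{-1}(1-s))$ is regularly varying at $0$, or one of Falk's equivalent conditions) and verify it for the Rayleigh and Pareto examples the paper later uses, since those are the cases in which the lemma is actually invoked.
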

\begin{proof}[Proof of Lemma \ref{Lemma_Falk}]
The proof of Lemma \ref{Lemma_Falk} can be found in \cite{Falk}.
\end{proof}

Also, we need the following Lemma which is closely related to the previous one:
\begin{lem}\label{Lemma_an_bn}
	In Lemma \ref{Lemma_Falk} we have
	\begin{equation}\label{Eq_Appendix_Lemma_an_bn}
		\lim_{n \rightarrow \infty} \frac{a_n}{b_n}=\infty,
	\end{equation}
    where $a_n$ and $b_n$ are defined in (\ref{Eq_Appendix_Lemma_Falk_2}).
\end{lem}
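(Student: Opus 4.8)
The plan is to obtain the claim as a soft consequence of Falk's theorem (Lemma~\ref{Lemma_Falk}) itself, together with nonnegativity of the channel power, rather than by estimating $a_n$ and $b_n$ separately. To fix intuition I would first record the elementary identity obtained by substituting $1-F(a_n)=i/n$ --- valid by the very definition $a_n=F^{-1}(1-i/n)$ --- into $b_n=\sqrt{i}/(n f(a_n))$, namely
\[
\frac{a_n}{b_n}=\frac{n\,f(a_n)\,a_n}{\sqrt{i}}=\sqrt{i}\cdot\frac{a_n f(a_n)}{1-F(a_n)} .
\]
Since $\sqrt{i}\to\infty$, this already gives the lemma whenever the ``hazard-type'' ratio $a_n f(a_n)/(1-F(a_n))$ stays bounded away from $0$, which is the case for the distributions of interest (it behaves like $a_n/\mu\to\infty$ for a Rayleigh tail and tends to the constant $\alpha>0$ for a Pareto tail). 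I would nevertheless present an argument that avoids inspecting the tail of $F$.

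The clean route uses only that the channel power satisfies $X_{(n-i+1)}\ge 0$ almost surely and that $b_n>0$, so that $(X_{(n-i+1)}-a_n)/b_n\ge -a_n/b_n$ almost surely. Suppose, towards a contradiction, that $a_n/b_n$ does \emph{not} tend to $\infty$; then there is a finite constant $M$ and an infinite set of indices $n$ along which $a_n/b_n\le M$, and along that subsequence $(X_{(n-i+1)}-a_n)/b_n\ge -M$ surely, hence $\Pr\{(X_{(n-i+1)}-a_n)/b_n\le -M-1\}=0$. On the other hand, Lemma~\ref{Lemma_Falk}, applied with the specific constants in (\ref{Eq_Appendix_Lemma_Falk_2}), gives $(X_{(n-i+1)}-a_n)/b_n\Rightarrow N(0,1)$, so $\Pr\{(X_{(n-i+1)}-a_n)/b_n\le -M-1\}\to\Phi(-M-1)>0$, where $\Phi$ is the standard normal c.d.f.\ and $-M-1$ is a continuity point of it. This contradicts the previous statement along the chosen subsequence, and therefore $a_n/b_n\to\infty$, which is the assertion.

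The argument is short, and the only point needing care is the interface with Lemma~\ref{Lemma_Falk}: one must check that its normalising sequences may be taken to be exactly the $a_n,b_n$ of (\ref{Eq_Appendix_Lemma_Falk_2}), and one works throughout in the ambient regime $i\to\infty$, $i/n\to 0$ of that lemma. If one instead wanted the direct computation through the displayed identity, the real work would be precisely the tail-regularity (von Mises-type) hypotheses that already underlie Falk's theorem, so nothing is gained; that is why I would write the soft version.
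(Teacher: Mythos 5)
Your proof is correct, but it is genuinely different from what the paper does: the paper supplies no argument at all for this lemma, simply deferring to reference \cite{Pooya_2013}, whereas you give a short self-contained derivation. Your ``soft'' argument is sound: since the channel powers are nonnegative and $b_n>0$, the normalized statistic $(X_{(n-i+1)}-a_n)/b_n$ is bounded below by $-a_n/b_n$ almost surely, so if $a_n/b_n$ stayed bounded by $M$ along a subsequence, the limiting law would have to assign zero mass to $(-\infty,-M-1]$, contradicting the $N(0,1)$ limit guaranteed by Lemma~\ref{Lemma_Falk} with the explicit constants of (\ref{Eq_Appendix_Lemma_Falk_2}) (every point being a continuity point of $\Phi$, and weak convergence passing to subsequences). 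Your preliminary identity $a_n/b_n=\sqrt{i}\cdot a_nf(a_n)/(1-F(a_n))$ and the Rayleigh/Pareto sanity checks are also correct. What your route buys is independence from any tail computation: the claim is extracted from the validity of Falk's CLT itself rather than from von Mises--type estimates on $F$, which is exactly the kind of argument one would want in place of an external citation. The one hypothesis you add beyond the bare statement of Lemma~\ref{Lemma_Falk} is that the underlying random variable is supported on $[0,\infty)$; this is automatic in the paper's setting (the $\gamma_{i,j}$ are channel power gains, and both worked examples are nonnegative), but it is worth stating explicitly since the lemma as written is phrased for a general c.d.f.\ $F$, and the conclusion genuinely uses positivity of the upper endpoint of the support.
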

\begin{proof}[Proof of Lemma \ref{Lemma_an_bn}]
Proof of Lemma \ref{Lemma_an_bn} can be found in \cite{Pooya_2013}.
\end{proof}
Now, with the help of Lemmas \ref{Lemma_Falk} and \ref{Lemma_an_bn}, we have the following:
\begin{eqnarray}\label{Eq_Appendix_Lemma_Direct_Power_Main}
& & \Pr\{\gamma_{(r),(r)} >  \beta \phi \cap \dots \cap \gamma_{(n),(n)}>  \beta  \phi\}\\ \nonumber
 &\stackrel{(a)}=& \Pr\{\gamma_{(r),(r)} >  \beta \phi\} \\ \nonumber
 &\stackrel{(b)}=& \Pr\{\gamma_{(r),(r)} >  \beta K \mu (m+(1+\delta_2)\zeta_n(n-m)) \} \\ \nonumber
 &\stackrel{(c)}\geq& \Pr\{\gamma_{(r),(r)} >  (1-\delta_3)F^{-1}\left(1-\frac{m}{n}\right) \} \\ \nonumber
 &\stackrel{(d)}=& \Pr\{\gamma_{(r),(r)} >  (1-\delta_3)a_n \} \\ \nonumber
 &=& \Pr\{\frac{\gamma_{(r),(r)} - a_n} {b_n} >  -\delta_3 \frac{a_n}{b_n} \} \\ \nonumber
 &\stackrel{(e)}\rightarrow& 1,
\end{eqnarray}
where $K>1$, (a) is due to (\ref{Eq_Main_Channel_Sorting}), (b) is due to (\ref{Eq_Main_Rigorous_Proof_Definitions}), (c) is due to the required inequality (\ref{Eq_Main_Theorem_Required_Inequality}) in the Theorem \ref{Theorem_Main},
(d) is due to (\ref{Eq_Appendix_Lemma_Falk_2}), and (e) is due to Lemmas \ref{Lemma_Falk} and \ref{Lemma_an_bn}.
\end{proof}
\begin{proof}[Proof of Lemma \ref{Lemma_Interference}]
We have:
\begin{eqnarray}\label{Eq_Appendix_Lemma_Interference_Main}
& &\Pr\{I_r \leq \phi \cap \dots \cap I_n \leq  \phi\}  \\ \nonumber
&=& 1- \Pr\{I_r > \phi \cup \dots \cup I_n > \phi\} \\ \nonumber
&\geq& 1- \sum_{i=r}^{n}{\Pr\{I_i>\phi\}} \\ \nonumber
&=& 1- m\Pr\{I_r>\phi\} \\ \nonumber
&\stackrel{(a)}\rightarrow& 1,
\end{eqnarray}
where (a) can be easily proved by the Large Deviations Principle (LDP) theorems for super-exponential and sub-exponential distributions (see
\cite{Hollander} for LDP theorem for super-exponential and \cite{Mikosch} for LDP theorem for sub-exponential distributions).
\end{proof}
\bibliographystyle{ieeetr}

\begin{thebibliography}{1}
\bibitem{Gupta}
P. Gupta and P. R. Kumar, ``The Capacity of Wireless Networks,''  \emph{IEEE Trans. Inf. Theory}, vol. 46, no. 2, pp. 388-404, 2000.

\bibitem{Fran}
M. Franceschetti, O. Dousse, and D. N. C. Tse, ``Closing the Gap in the
Capacity of Wireless Networks via Percolation Theory,'' \emph{IEEE Trans. Inf. Theory}, vol. 53, no. 3, pp. 1009-1018, 2007.

\bibitem{Ayfer}
A. \"{O}zg\"{u}r, O. L\'{e}v\^{e}que, and D. Tse, ``Hierarchical Cooperation Achieves Optimal Capacity Scaling in Ad-hoc Networks,'' \emph{IEEE Trans. Inf. Theory}, vol. 53, no. 10, pp. 3549-3572, 2007.

\bibitem{Gowikar}
R. Gowaikar, B. M. Hochwald, and B. Hassibi, ``Communication Over a
Wireless Network with Random Connections,'' \emph{IEEE Trans. Inf. Theory}, vol. 52, no. 7, pp. 2857-2871, 2006.

\bibitem{Cui}
S. Cui, A. M. Haimovich, O. Somekh, H. Vincent Poor, and S. Shamai (Shitz), ``Throughput Scaling of Wireless Networks With Random Connections,'' \emph{IEEE Trans. Inf. Theory}, vol. 56, no. 8, pp. 3793-3806, 2010.

\bibitem{Ebrahimi}
M. Ebrahimi, M. A. Maddah-Ali, and A. K. Khandani, ``Throughput Scaling Laws for Wireless Networks With Fading Channels,'' \emph{IEEE Trans. Inf. Theory}, vol. 53, no. 11, pp. 4250-4254, 2007.

\bibitem{Ebrahimi_2011}
M. Ebrahimi and A. K. Khandani, ``Rate-Constrained Wireless Networks With Fading Channels: Interference-Limited and Noise-Limited Regimes,'' \emph{IEEE Trans. Inf. Theory}, vol. 57, no. 12, pp. 7714-7732, 2011.

\bibitem{Pooya_WCL}
S. P. Shariatpanahi, B. H. Khalaj, K. Alishahi, and H Shah-Mansouri, ``One-Hop Throughput of Wireless Networks with Random Connections,'' \emph{IEEE Wireless Commun. Lett.}, vol. 1, no. 3, pp. 205-208, 2012.

\bibitem{Pooya_2013}
S. P. Shariatpanahi, B. H. Khalaj, K. Alishahi, and H Shah-Mansouri, ``Throughput of Large One-hop Wireless
Networks with General Fading,'' http://arxiv.org/abs/1306.5291, 2013.

\bibitem{Love}
D. J. Love, R. W. Heath Jr, V. K. N. Lau, D. Gesbert, B. D. Rao, M. Andrews, ``An Overview of Limited Feedback in Wireless Communication Systems," \emph{IEEE J. Sel. Areas Commun.}, vol. 26, no. 8, pp. 1341-1365, 2008.

\bibitem{MIMO_1}
G. Caire, N. Jindal, M. Kobayashi, and N. Ravindran, ``Achievable Throughput of MIMO Downlink Beamforming with Limited Channel
Information,'' in \emph{Proc. IEEE Int. Symp. Personal, Indoor, and Mobile Radio Commun.}, 2007, pp. 1–5.

\bibitem{MUD_1}
D. Piazza and L.B. Milstein, ``Impact of Feedback Errors in Multiuser Diversity Systems,'' in \emph{Proc. of IEEE VTC Fall},
pp. 257- 261, Sep. 2005.

\bibitem{MUD_2}
J. Hamalainen and R. Wichman, ``Performance of Multiuser Diversity in the Presence of Feedback Errors,'' in \emph{Proc. IEEE Int. Symp. Personal, Indoor, and Mobile Radio Commun.},
2004, pp. 599-603.

\bibitem{MUD_3}
S. Valentin and T. Wild, ``Studying the Sum Capacity of Mobile Multiuser Diversity Systems with Feedback Errors and Delays,'' in \emph{Proc. IEEE 72nd
Vehicular Technology Conference (VTC)}, Sep. 2010.

\bibitem{Falk}
M. Falk, ``A Note on Uniform Asymptotic Normality of Intermediate Order Statistics,'' \emph{Ann. Ins. Statist. Math.}, vol. 41, pp. 19-29, 1989.

\bibitem{Hollander}
F. den Hollander, \emph{Large Deviations}, Providence: American Mathematical Society, 2000.

\bibitem{Mikosch}
T. Mikosch and A. V. Nagaev, ``Large Deviations of Heavy-Tailed Sums with Applications in Insurance,'' \emph{Extremes}, vol. 1, no. 1, pp. 81-110, 1998.

\end{thebibliography}

\end{document}